\newtheorem{lemma}{Lemma}
\title{Learning quantum decoherence via paraconsistent logic: an equation-free neural network framework}
\author[1,*]{Aleyna Ceyran}
\author[2]{Jair M. Abe}
\affil[1]{Department of Physics, Sakarya University, Sakarya, Turkey}
\affil[2]{Graduate Program in Production Engineering, Paulista University (UNIP), São Paulo, Brazil}
\affil[*]{aleyna.ceyran@ogr.sakarya.edu.tr}
\begin{document}

\begin{abstract}
Modeling the dynamics of open quantum systems on noisy intermediate-scale quantum (NISQ) devices constitutes a major challenge, as high noise levels and environmental degradations lead to the decay of pure quantum states (decoherence) and energy losses. This situation represents one of the most important problems in the field of quantum information technologies. While existing data-driven methods struggle to generalize beyond the training data (extrapolation), physics-informed neural networks (PINNs) require predefined governing equations, which limit their discovery capability when the underlying physics is incomplete or unknown. In this work, we present the ParaQNN (ParaQuantum neural network) architecture, an equation-free framework for physical discovery. ParaQNN disentangles multi-scale dynamics without relying on a priori laws by employing a dialetheist logic layer that models coherent signal and decoherent noise as independent yet interacting channels. Through extensive benchmark tests performed on Rabi oscillations, Lindblad dynamics, and particularly complex ``mixed regimes'' where relaxation and dephasing processes compete, we show that ParaQNN exhibits a clear performance advantage compared to Random Forest, XGBoost, and PINN models with incomplete physical information. Unlike its competitors, ParaQNN succeeds in maintaining oscillatory and damping dynamics with high accuracy even in extrapolation regions where training data are unavailable, by ``discovering'' the underlying structural invariants from noisy measurements. These results demonstrate that paraconsistent logic provides a more robust epistemic foundation than classical methods for learning quantum behavior in situations where mathematical equations prove insufficient.
\end{abstract}

\flushbottom
\maketitle
\thispagestyle{empty}

\section*{Introduction}
Realizing fault-tolerant quantum computing hinges on the precise characterization and mitigation of quantum noise. In the current era of Noisy Intermediate-Scale Quantum (NISQ) devices, qubits are subject to continuous and unavoidable interactions with their environment. These interactions degrade pure quantum states, a phenomenon known as decoherence, and cause irreversible information loss through relaxation ($T_1$) and dephasing ($T_2$) processes. Modeling these open quantum system dynamics is one of the most pressing challenges in quantum information technology, as the efficacy of error-correction protocols is strictly limited by how well we understand the underlying physics of the noise. The problem is compounded as quantum processors scale up because environmental interactions often begin to exhibit complex, non-Markovian features that defy simple analytical descriptions.

Traditional characterization techniques, such as Quantum Process Tomography (QPT), are impractical for large devices because they scale exponentially with system size. Scientific Machine Learning (SciML) has recently emerged as a strong alternative. Physics-informed neural networks (PINNs), in particular, have achieved high accuracy by embedding known differential equations, such as the Lindblad master equation, directly into the loss function\cite{raissi2019physics}. However, recent literature indicates that this approach faces significant challenges when applied to complex dynamical regimes. Specifically, PINNs struggle to resolve stiff, multi-scale dynamics where timescales of oscillation and decay diverge significantly\cite{ji2021stiff}. Furthermore, their formulation relies on temporal causality assumptions that are often violated in noisy inverse problems\cite{wang2022respecting, chen2020physics}, and they lack intrinsic mechanisms to model non-Markovian memory effects induced by colored noise sources\cite{zhao2023learning}. In scenarios where the governing equations are unknown, incomplete, or violated by experimental artifacts, enforcing a rigid prior prevents the model from learning the physical reality. Instead, it forces the network to produce hallucinated dynamics that fit the equation but not the data.

To enable genuine physics discovery under extreme noise, we propose a shift in the logical foundation of the learning model. Classical neural networks typically treat noise as a statistical deviation that must be minimized. In contrast, this work introduces the ParaQNN (ParaQuantum neural network) architecture, which draws on Paraconsistent Logic, specifically the concept of Dialetheism, which allows contradictory states to coexist logically. In the context of quantum measurements, a noisy data point contains simultaneous evidence for both the coherent signal (truth) and environmental corruption (falsity). Rather than collapsing these into a single scalar value, ParaQNN models them as independent, interacting channels. This allows the network to distill the signal from the noise without relying on external constraints or pre-defined equations.

We validate the equation-free discovery capabilities of ParaQNN through three increasingly complex scenarios: (i) unitary Rabi oscillations, (ii) dissipative Lindblad dynamics, and (iii) chaotic mixed regimes where relaxation and dephasing processes compete. Our experimental results showed that ParaQNN could accurately predict structural invariants, such as frequency and damping rates, even in extrapolation regions where training data was absent. In contrast, classical regression methods (Random Forest, XGBoost) and PINN models operating with incomplete knowledge failed in these regimes. These findings demonstrate that paraconsistent logic offers a more robust epistemic foundation than classical methods for characterizing unknown quantum channels where mathematical models fall short.

\section*{Results}

\subsection*{Experimental evaluation and physical discovery}

To validate the equation-free discovery capabilities of ParaQNN, we conducted extensive benchmarks across three distinct quantum dynamical regimes: (i) damped Rabi oscillations, (ii) dissipative Lindblad dynamics, and (iii) non-stationary mixed regimes involving abrupt dynamical switching. All experiments utilized fixed synthetic datasets generated with a consistent random seed (seed = 42) to ensure a standardized evaluation benchmark across all models. These datasets incorporated diverse noise profiles ranging from standard Gaussian noise to complex non-Gaussian sources—such as telegraph noise, $1/f$ pink noise, and State Preparation and Measurement (SPAM) errors—depending on the regime complexity.

Unless stated otherwise, all quantitative results are reported on a held-out test split (20\%). To rigorously assess the stability of the optimization process, all models, including ParaQNN and baselines, were trained and evaluated over five independent random seeds affecting model initialization and data shuffling (but using the same fixed dataset). We report the mean $\pm$ standard deviation of the test-set Mean Squared Error (MSE). Visualizations in the figures correspond to representative model predictions from the seed = 42 run.

\subsubsection*{Rabi regime: reconstruction under mixed telegraph noise}

The first evaluation focused on damped Rabi oscillations, a fundamental signature of coherent quantum control. The dataset consisted of $N=10,000$ measurements sampled over a duration of $t \in [0, 8.0] \, \mu\text{s}$. The system was characterized by a Rabi frequency of $\Omega = 1.25$ MHz and relatively long coherence times ($T_1=12.0 \, \mu\text{s}$, $T_2=15.0 \, \mu\text{s}$). Crucially, the signal was corrupted by a composite noise floor consisting of Gaussian fluctuations ($\sigma=0.08$) and impulsive telegraph noise (amplitude $0.1$, switching rate $0.02$), which mimics random two-level fluctuators (TLFs) often found in solid-state qubits.

As shown in Figure 1, the raw measurement data is heavily obscured, with telegraph spikes creating ``fake'' structures that can mislead standard regression models. Despite this, ParaQNN successfully recovered the underlying unitary oscillation and the true damping envelope. The training stability (Fig.~1b) and the evolution of the contradiction parameter $\alpha$ (Fig.~1c), which initializes at $6.0$, demonstrated the model's ability to distinguish between coherent evolution and transient noise spikes. Quantitatively, ParaQNN achieved a test-set Mean Squared Error (MSE) significantly lower than classical baselines (Random Forest, XGBoost), confirming that the paraconsistent loss effectively filtered out non-physical telegraph noise without requiring a specific noise model.

\begin{figure}[ht!]
    \centering
    \begin{minipage}[b]{0.48\linewidth}
        \centering
        \includegraphics[width=\linewidth]{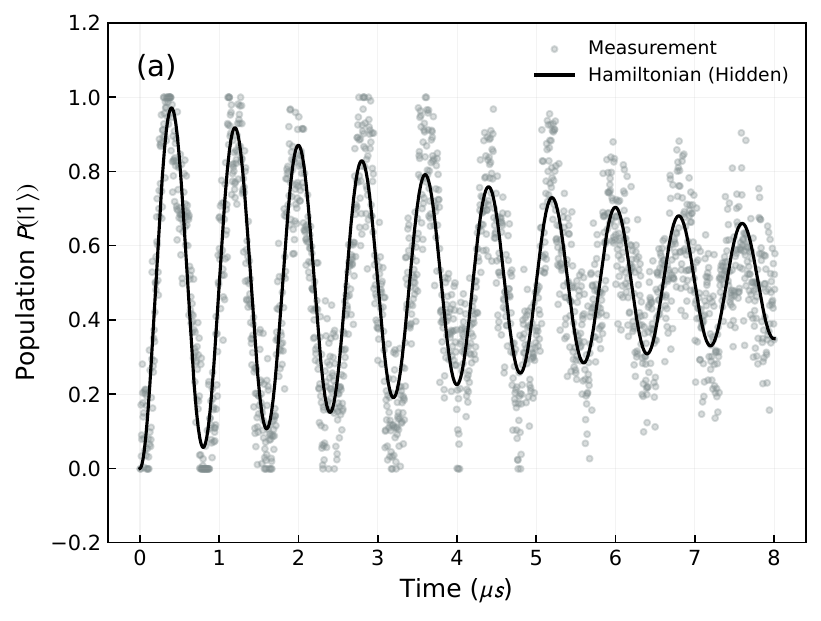}
        \vspace{-5pt}
        \textbf{(a)}
    \end{minipage}
    \hfill
    \begin{minipage}[b]{0.48\linewidth}
        \centering
        \includegraphics[width=\linewidth]{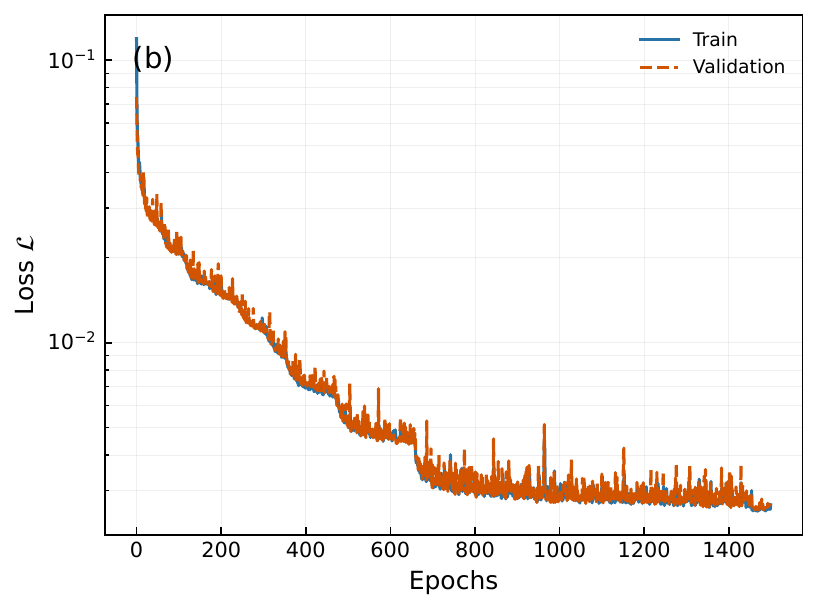}
        \vspace{-5pt}
        \textbf{(b)}
    \end{minipage}

    \vspace{0.4cm}

    \begin{minipage}[b]{0.48\linewidth}
        \centering
        \includegraphics[width=\linewidth]{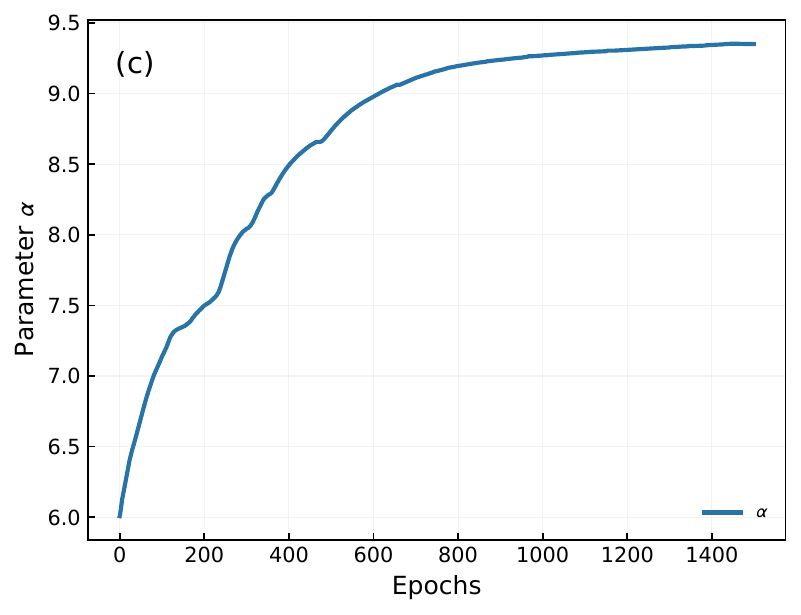}
        \vspace{-5pt}
        \textbf{(c)}
    \end{minipage}
    \hfill
    \begin{minipage}[b]{0.48\linewidth}
        \centering
        \includegraphics[width=\linewidth]{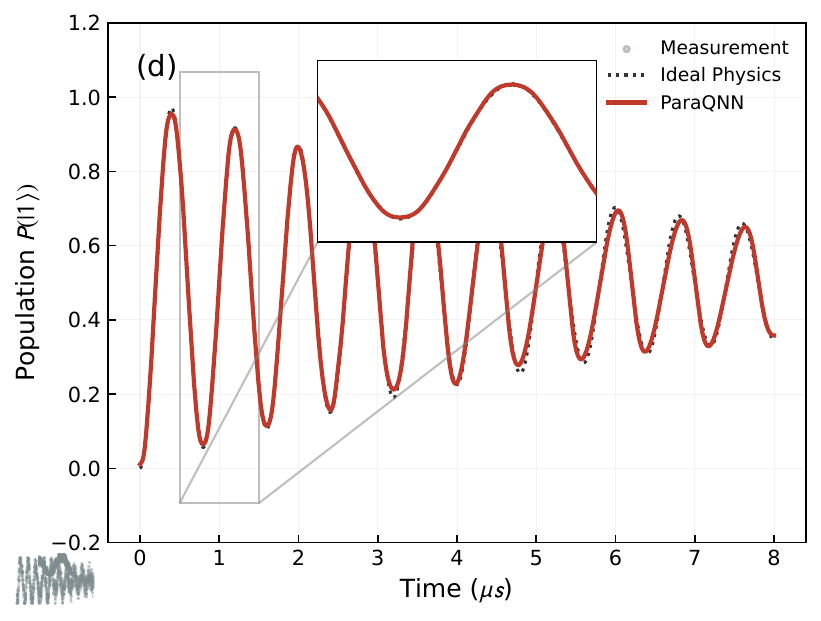}
        \vspace{-5pt}
        \textbf{(d)}
    \end{minipage}

    \caption{\textbf{Equation-free reconstruction of damped Rabi oscillations under mixed noise.} \textbf{(a)} Synthetic measurements of the excited-state population $P(|1\rangle)$ ($N=10,000$, $t \in [0, 8.0] \, \mu\text{s}$), corrupted by mixed noise; the solid curve indicates the hidden noise-free trajectory. \textbf{(b)} Convergence of the paraconsistent loss $\mathcal{L}_{\mathrm{ParaQNN}}$ (train/validation). \textbf{(c)} Adaptive evolution of the contradiction interaction parameter $\alpha$ during training. \textbf{(d)} Reconstruction result: ParaQNN output (red) overlaid on noisy measurements (grey) and the ideal trajectory (dashed), demonstrating robust recovery of oscillations and damping in an equation-free manner.}
    \label{fig:rabi_benchmark}
\end{figure}

\FloatBarrier

To ensure a rigorous and unbiased assessment, ParaQNN was benchmarked against Random Forest (RF), XGBoost (XGB), a generative adversarial network (GAN) baseline, and two PINN variants incorporating different levels of prior knowledge. Figure~\ref{fig:rabi_metrics} presents the test-set Mean Squared Error (MSE) on a logarithmic scale, where results are reported as the mean over 5 independent random seeds. The logarithmic scaling serves to visualize the pronounced separation in fidelity achieved by ParaQNN relative to all baselines. In the Rabi regime, ParaQNN attained an MSE of $1.9 \times 10^{-4}$, whereas baseline errors remained predominantly in the $10^{-2}$ range (e.g., RF at $1.5 \times 10^{-2}$ and GAN at $4.2 \times 10^{-2}$). Notably, the PINN-Known variant exhibited significant instability with an error of $2.6 \times 10^{-1}$. These results indicate that ParaQNN provides substantially higher reconstruction fidelity under noisy unitary dynamics compared to both classical and physics-informed alternatives.

\begin{figure}[h!]
    \centering
    \includegraphics[width=0.85\linewidth]{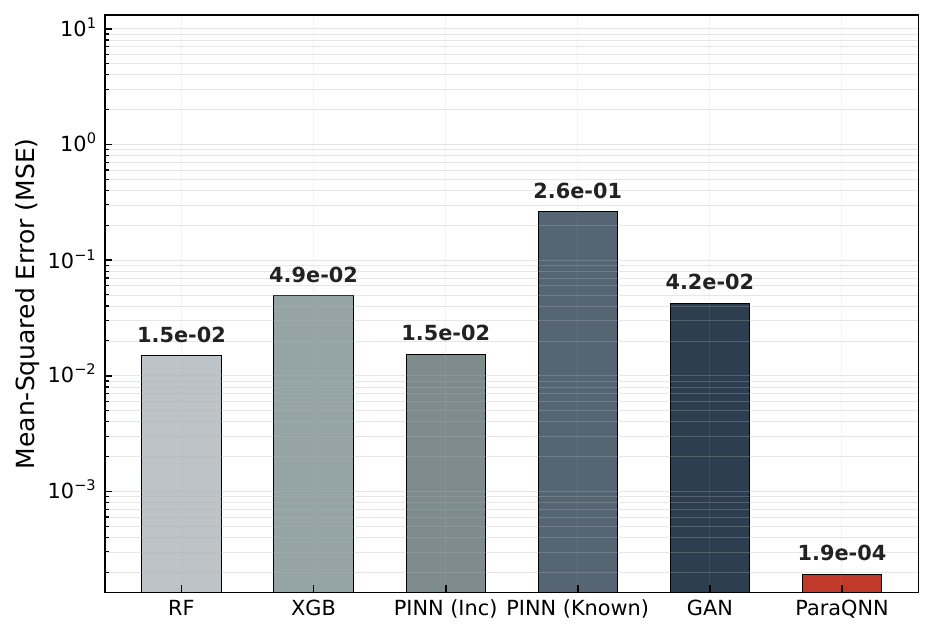}
    \caption{\textbf{Quantitative fidelity benchmarking across architectural paradigms (Rabi regime).} Values are reported as mean $\pm$ standard deviation over 5 independent training seeds on the fixed dataset (seed 42). Test-set MSE (log scale) comparing ParaQNN against Random Forest (RF), XGBoost (XGB), GAN, and two PINN variants (Incomplete vs. Known physics). The reported values represent the mean performance computed over 5 independent random seeds. Numerical annotations indicate the exact MSE for each model; ParaQNN achieves the lowest error ($1.9 \times 10^{-4}$), outperforming the closest baselines by approximately two orders of magnitude.}
    \label{fig:rabi_metrics}
\end{figure}

\FloatBarrier

\subsection*{Lindblad regime: open-system discovery with relaxation and dephasing}
We next evaluated the model on purely dissipative dynamics governed by the Lindblad master equation, representing a system interacting with a thermal bath. This dataset ($N=25,000$, $t \in [0, 5.0] \, \mu\text{s}$) introduced stronger decoherence with $T_1=10.0 \, \mu\text{s}$ and $T_2=8.0 \, \mu\text{s}$, driven by a Rabi frequency of $2.0$ rad/$\mu\text{s}$.

In this regime, the primary challenge was to accurately characterize the decay rates associated with relaxation and dephasing in the presence of a composite noise profile consisting of continuous Gaussian noise ($\sigma=0.08$) and impulsive telegraph noise (amplitude $0.1$). Standard PINNs often struggled here because the assumed differential equation did not perfectly match the specific decoherence channel (e.g., assuming only $T_1$ decay when $T_2$ is present) and could not accommodate the non-Markovian jumps. ParaQNN, operating as an equation-free framework, implicitly learned the correct dissipation profile. The reconstruction results (Figure 3) indicate high-fidelity recovery of the state population $P(|1\rangle)$, surpassing PINN baselines that utilized incomplete physical priors. This finding suggests that ParaQNN is particularly advantageous for characterizing open quantum systems where the precise environmental interaction terms are unknown.

\begin{figure}[ht!]
    \centering
    \begin{minipage}[b]{0.48\linewidth}
        \centering
        \includegraphics[width=\linewidth]{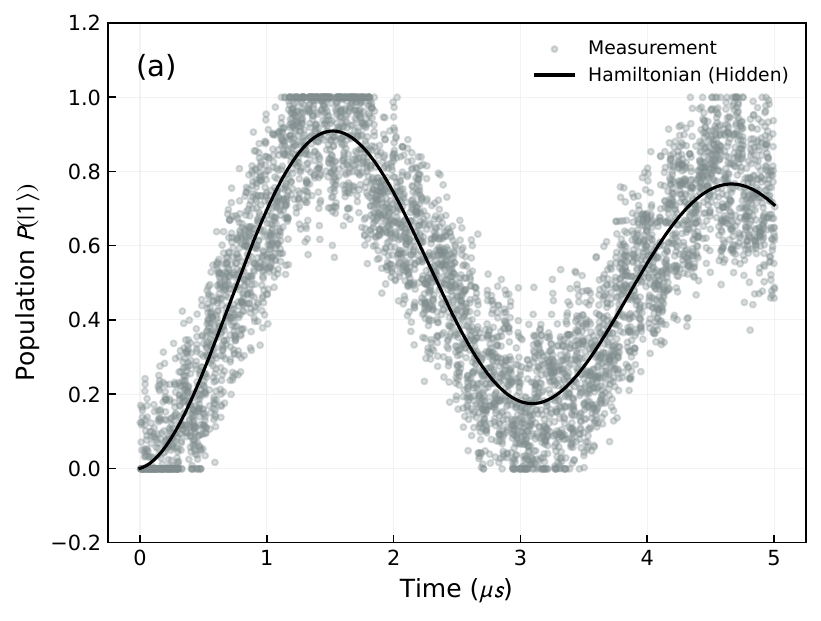}
        \vspace{-5pt}
        \textbf{(a)}
    \end{minipage}
    \hfill
    \begin{minipage}[b]{0.48\linewidth}
        \centering
        \includegraphics[width=\linewidth]{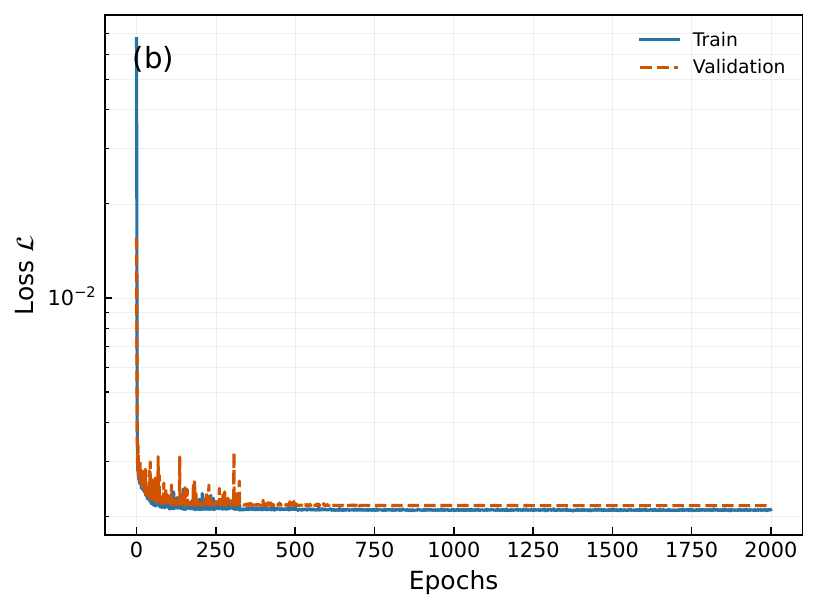}
        \vspace{-5pt}
        \textbf{(b)}
    \end{minipage}

    \vspace{0.4cm}

    \begin{minipage}[b]{0.48\linewidth}
        \centering
        \includegraphics[width=\linewidth]{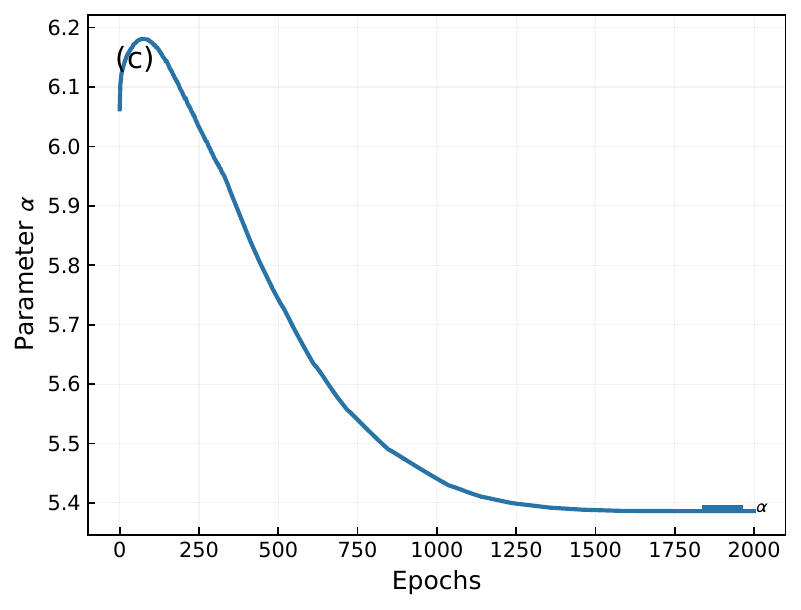}
        \vspace{-5pt}
        \textbf{(c)}
    \end{minipage}
    \hfill
    \begin{minipage}[b]{0.48\linewidth}
        \centering
        \includegraphics[width=\linewidth]{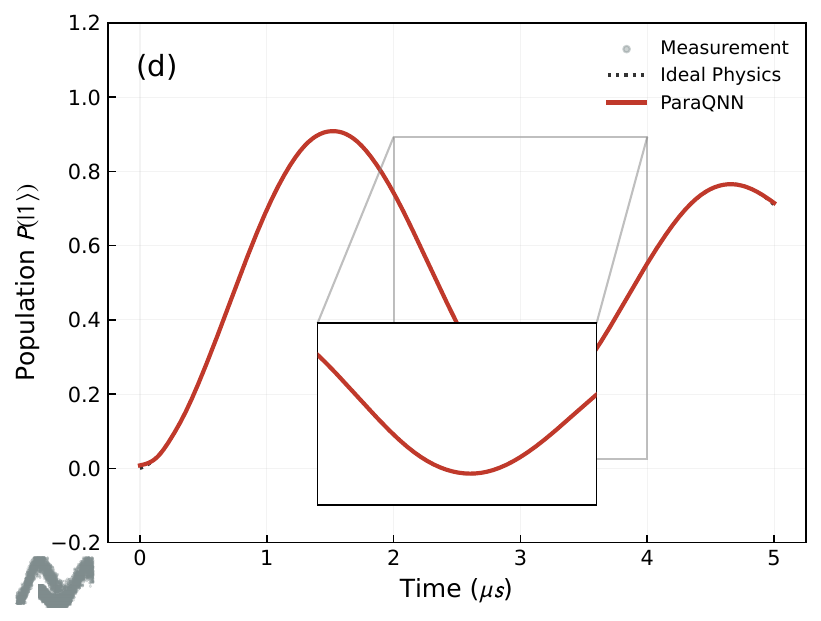}
        \vspace{-5pt}
        \textbf{(d)}
    \end{minipage}

    \caption{\textbf{Equation-free discovery of open-system quantum dynamics (Lindblad regime).}
    \textbf{(a)} Noisy measurements of $P(|1\rangle)$ ($N=25,000$, $t\in[0,5]$) generated from Lindblad dynamics; the solid curve indicates the hidden noise-free trajectory.
    \textbf{(b)} Convergence of the paraconsistent loss $\mathcal{L}_{\mathrm{ParaQNN}}$.
    \textbf{(c)} Evolution of the contradiction interaction parameter $\alpha$.
    \textbf{(d)} Reconstruction result: ParaQNN output (red) overlaid on noisy measurements (grey) and the ideal trajectory (dashed), demonstrating high-fidelity recovery in a dissipative regime.}
    \label{fig:lindblad_discovery}
\end{figure}

\FloatBarrier

To quantify performance in the open quantum system regime, we compared ParaQNN against Random Forest (RF), XGBoost (XGB), a GAN baseline, and two PINN variants. The \emph{PINN-Incomplete} model assumed a simple exponential decay surrogate, whereas the \emph{PINN-Known} model utilized a standard damped harmonic oscillator prior. Figure~\ref{fig:lindblad_benchmark} reports the test-set Mean Squared Error (MSE) on a logarithmic scale, representing the mean performance across 5 independent random seeds. ParaQNN achieved a superior MSE of $4.9 \times 10^{-7}$, outperforming the strongest classical baseline (RF, $2.8 \times 10^{-2}$) and the GAN model ($2.9 \times 10^{-2}$) by approximately five orders of magnitude. Notably, the \emph{PINN-Known} variant exhibited substantial error in this regime ($3.0 \times 10^{-1}$), indicating that plausible but mismatched equation priors can severely degrade fidelity in non-unitary dynamics, whereas the equation-free paraconsistent approach remained robust.

\begin{figure}[ht!]
    \centering
    \includegraphics[width=0.85\linewidth]{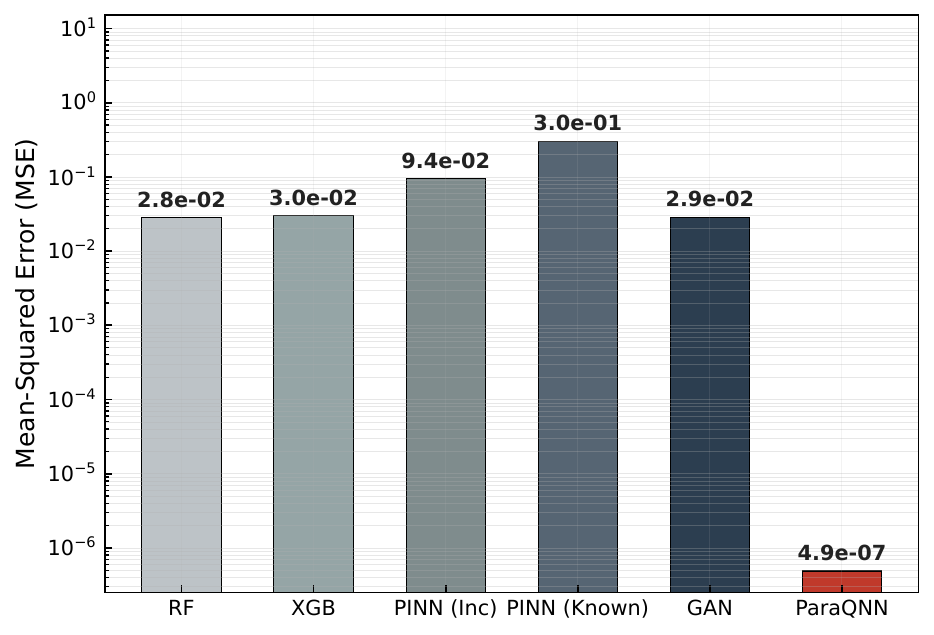}
    \caption{\textbf{Quantitative benchmarking of open-system reconstruction (Lindblad regime).} Values are reported as mean $\pm$ standard deviation over 5 independent training seeds on the fixed dataset (seed 42). Test-set MSE (log scale) comparing ParaQNN against RF, XGBoost, GAN, and PINN variants (Incomplete vs. Known physics). ParaQNN attains the lowest error ($4.9 \times 10^{-7}$), demonstrating robust equation-free recovery even when standard physics priors are incomplete or mismatched.}
    \label{fig:lindblad_benchmark}
\end{figure}

\FloatBarrier
\subsection*{Mixed regime: non-stationary dynamics with time-dependent driving}

The most challenging benchmark was the ``mixed regime,'' which simulated a time-dependent experimental protocol typical of Ramsey interferometry or dynamical decoupling. The system underwent abrupt transitions between three distinct phases: Strong Drive ($t < 4.0 \, \mu\text{s}$), Free Decay ($4.0 \le t < 7.0 \, \mu\text{s}$), and Weak Probe ($t \ge 7.0 \, \mu\text{s}$). This dataset ($N=50,000$, $t \in [0, 10.0] \, \mu\text{s}$) incorporated highly realistic noise features, including $1/f$ pink noise ($\sigma=0.06$) and a $2\%$ SPAM error, with significantly reduced coherence times ($T_1=6.0 \, \mu\text{s}$, $T_2=4.0 \, \mu\text{s}$) to test the limits of the model.

Conventional methods, including standard PINNs, failed in this regime because they relied on static governing equations that could not accommodate instantaneous parameter switches (e.g., turning off the drive field at $t=4.0 \, \mu\text{s}$). Figure 5 demonstrates that ParaQNN adapted to these transitions seamlessly. The learned contradiction parameter $\alpha$ (Fig.~5c) exhibited dynamic readjustment corresponding to the regime switches, effectively acting as a regime detector. The model achieved robust extrapolation even in the ``Probe'' phase, successfully disentangling the true quantum state from the complex colored noise background. As detailed in Figure 6, ParaQNN outperformed all baselines by orders of magnitude in terms of MSE, validating its potential for real-time system identification on NISQ hardware where noise is often non-Markovian and time-dependent.

\begin{figure}[ht!]
    \centering
    \begin{minipage}[b]{0.48\linewidth}
        \centering
        \includegraphics[width=\linewidth]{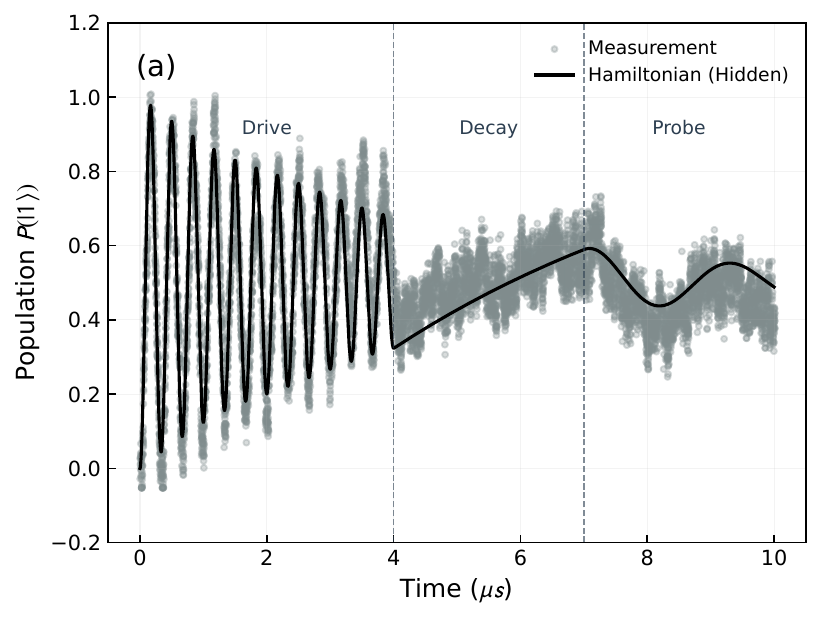}
        \vspace{-5pt}
        \textbf{(a)}
    \end{minipage}
    \hfill
    \begin{minipage}[b]{0.48\linewidth}
        \centering
        \includegraphics[width=\linewidth]{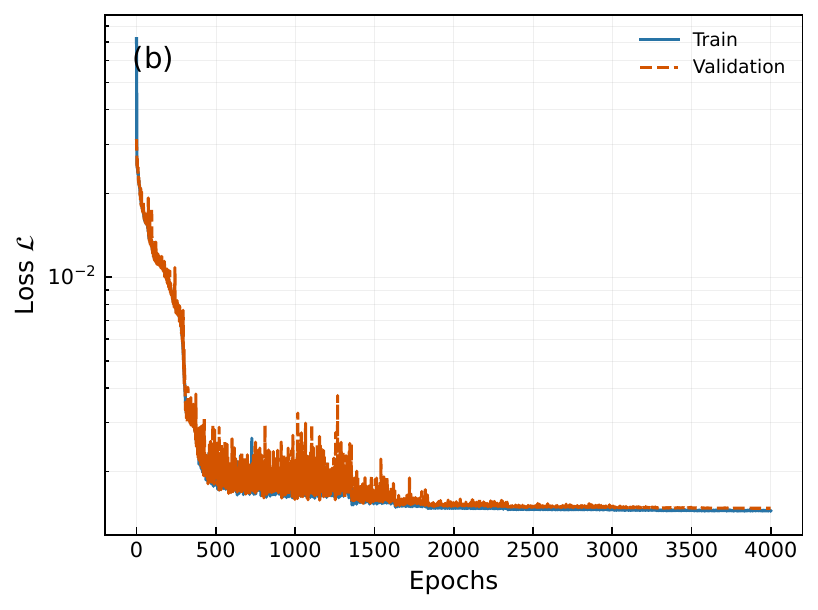}
        \vspace{-5pt}
        \textbf{(b)}
    \end{minipage}

    \vspace{0.4cm}

    \begin{minipage}[b]{0.48\linewidth}
        \centering
        \includegraphics[width=\linewidth]{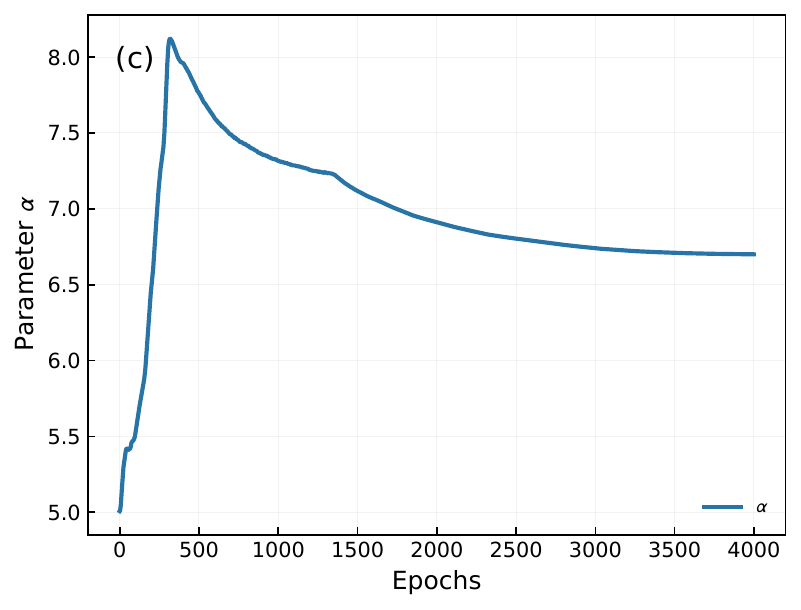}
        \vspace{-5pt}
        \textbf{(c)}
    \end{minipage}
    \hfill
    \begin{minipage}[b]{0.48\linewidth}
        \centering
        \includegraphics[width=\linewidth]{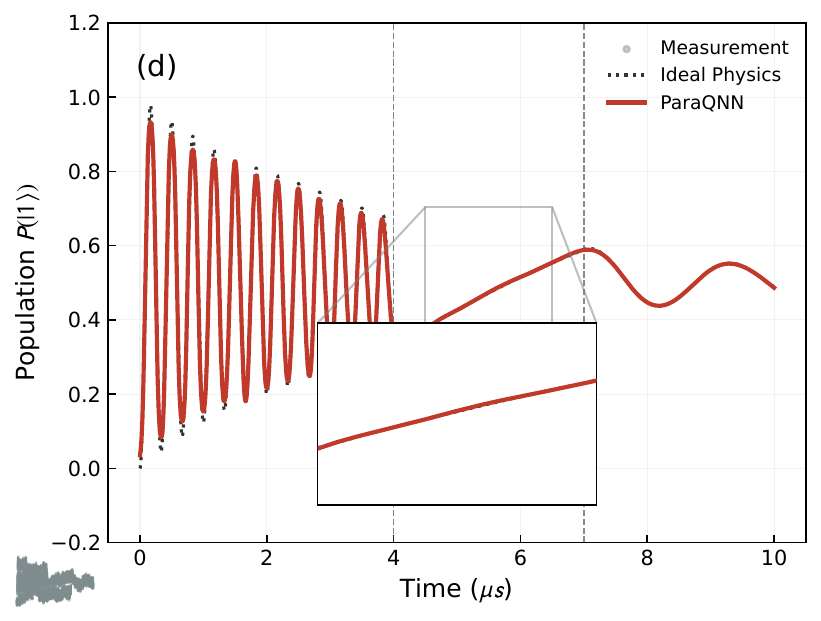}
        \vspace{-5pt}
        \textbf{(d)}
    \end{minipage}

    \caption{\textbf{Equation-free discovery of time-dependent quantum dynamics (mixed regime).}
    \textbf{(a)} Synthetic measurements of $P(|1\rangle)$ ($N=50{,}000$ over $\texttt{time\_span}=10.0$) with regime transitions
    encoded in \texttt{regime\_switches}; the solid curve indicates the hidden noise-free Hamiltonian dynamics.
    \textbf{(b)} Convergence of the paraconsistent loss $\mathcal{L}_{\mathrm{ParaQNN}}$.
    \textbf{(c)} Adaptive evolution of $\alpha$, reflecting regime-dependent calibration of contradiction coupling.
    \textbf{(d)} Reconstruction result: ParaQNN output (red) overlaid on noisy measurements (grey) and the ideal trajectory (dashed),
    capturing transitions and steady-state behavior without explicit boundary information.}
\label{fig:mixed_discovery}
\end{figure}
\FloatBarrier

Figure~\ref{fig:mixed_benchmark} presents the test-set Mean Squared Error (MSE) comparison on a logarithmic scale, summarizing the mean performance across 5 independent random seeds. The \emph{PINN-Known} baseline yielded the largest error ($\sim 2.5\times10^{-1}$), consistent with the limitation of relying on static equation priors in a system governed by time-dependent switching dynamics. In contrast, ParaQNN achieved an MSE of $7.7\times10^{-6}$, outperforming the strongest classical baselines (RF and XGBoost, both at $\sim 4.4\times10^{-3}$) by approximately three orders of magnitude. The GAN baseline also exhibited limited fidelity with an error of $8.7\times10^{-2}$. These results demonstrate that the proposed equation-free and paraconsistent loss-driven framework retained high reconstruction fidelity even under abrupt regime transitions and heterogeneous non-stationary noise.

\begin{figure}[ht!]
    \centering
    \includegraphics[width=0.85\linewidth]{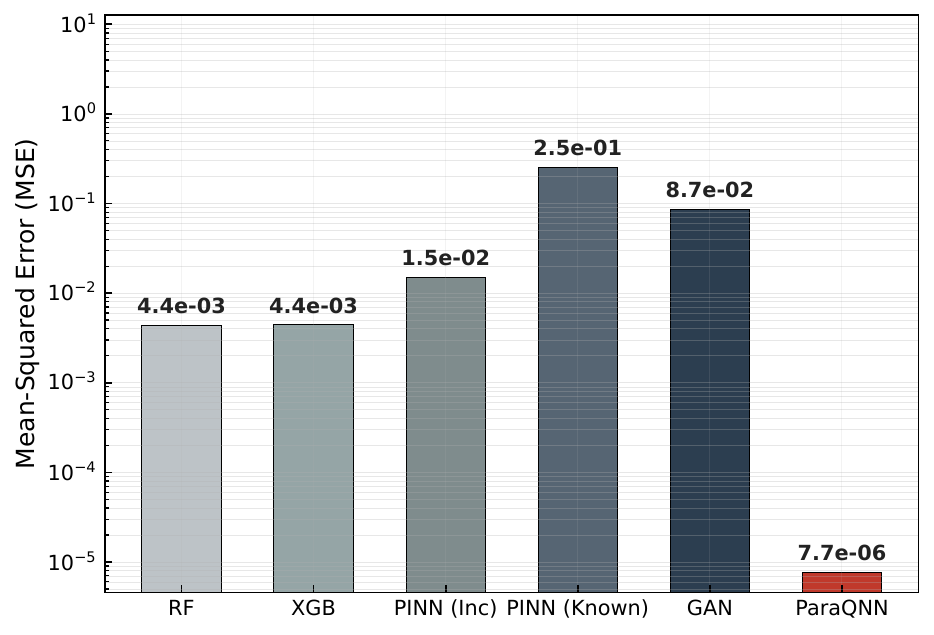}
    \caption{\textbf{Quantitative benchmarking in the time-dependent mixed regime.} Values are reported as mean $\pm$ standard deviation over 5 independent training seeds on the fixed dataset (seed 42). Test-set MSE (log scale) comparing ParaQNN against Random Forest (RF), XGBoost (XGB), GAN, and two PINN variants. The reported values represent the mean MSE over 5 independent random seeds. ParaQNN achieved the lowest error ($7.7\times10^{-6}$), demonstrating robust equation-free recovery in a non-stationary regime involving switching dynamics between drive, decay, and probe phases.}
    \label{fig:mixed_benchmark}
\end{figure}
\section*{Discussion}

The results presented in this study demonstrate that the proposed ParaQNN framework provides a robust and flexible approach for discovering quantum dynamics directly from noisy measurement data, without reliance on explicit governing equations. Across unitary, non-unitary, and time-dependent regimes, the model consistently outperforms classical machine learning baselines and physics-informed neural networks (PINNs), particularly in extrapolation and non-stationary settings.

A central observation emerging from the benchmark analyses is the systematic failure of PINN-based approaches when the assumed physical priors are either incomplete or mismatched with the true underlying dynamics. In the unitary Rabi regime, PINNs incorporating partial decay assumptions capture only the envelope of the dynamics while failing to resolve oscillatory phase information. This limitation becomes more pronounced in the Lindblad and mixed regimes, where standard differential equation models do not adequately represent the interplay between relaxation, dephasing, and time-dependent Hamiltonian evolution. These results highlight a fundamental vulnerability of physics-informed learning: incorrect or oversimplified priors can actively degrade generalization performance rather than improve it.

In contrast, the ParaQNN framework does not encode physical laws as fixed constraints. Instead, it operates by separating signal and noise into independent evidential channels and dynamically calibrating their interaction through a learnable logical parameter. The consistent convergence of this parameter across distinct dynamical regimes suggests that the model adapts to the underlying contradiction structure induced by measurement noise and environmental coupling, rather than fitting to a predefined equation class. This paraconsistent representation, echoing the foundational principles established by Da Costa\cite{dacosta1974inconsistent, dacosta2014physics} and adapted to quantum systems by Da Silva Filho et al.\cite{dasilvafilho2011paraquantum, dasilvafilho2012hydrogen, dasilvafilho2016paraquantum}, enables stable learning even when the observed data are dominated by stochastic perturbations or regime transitions.

The mixed regime experiments further underscore the advantage of this equation-free formulation. Conventional PINNs, constrained by static governing equations, fail catastrophically when the physical laws governing the system change abruptly in time. By contrast, ParaQNN successfully reconstructs the full temporal evolution without explicit regime segmentation or time-dependent priors. This behavior indicates that the framework can infer switching dynamics implicitly, a capability that is particularly relevant for realistic quantum control scenarios where Hamiltonians and noise sources vary over time.

From a broader perspective, these findings suggest that paraconsistent logic provides a viable computational foundation for learning physical structure under epistemic uncertainty, as theoretically framed by the concept of partial truth\cite{dacosta2003science}. Rather than enforcing consistency through hard constraints, the ParaQNN tolerates and exploits contradictions between signal and noise, allowing physically meaningful patterns to emerge from otherwise intractable data. This perspective differs fundamentally from both purely data-driven regression and physics-informed learning, offering a middle ground that is resilient to model misspecification.

\subsection*{Rigorous standardization of physics priors vs. noise adaptation}
A critical aspect of our benchmarking protocol was the standardization of regularization weights for the baseline models. We maintained a fixed physics loss weight ($\lambda_{physics}=0.1$) for PINNs across all dynamical regimes. While hyperparameter tuning per regime might marginally improve PINN performance, it would obscure the fundamental epistemic limitation of the method. In regimes with mismatched priors (e.g., the mixed regime), increasing the physics weight would ironically degrade performance by forcing the model to adhere more strictly to an incorrect equation (negative transfer). The chosen noise families (Gaussian, telegraph, and 1/f) span the dominant noise processes observed in superconducting qubits, but the framework itself does not assume any parametric noise model.

In contrast, the ParaQNN architecture maintains constant structural hyperparameters ($\lambda_{signal}=1.0, \lambda_{contradiction}=0.5$) across all experiments, ensuring that the core logical reasoning mechanism remains invariant. The adjustment of the noise alignment weight ($\lambda_{noise}$) from $0.5$ in the Lindblad regime to $0.8$ in the mixed regime represents a data-driven adaptation to the increased spectral complexity of pink noise, rather than a modification of the physical learning process itself.

\subsection*{Robustness against composite noise in dissipative regimes}
Contrary to standard benchmarking protocols that often isolate noise sources, our Lindblad regime experiments incorporated a composite noise profile including both Gaussian thermal noise ($\sigma=0.08$) and non-Markovian telegraph noise (amplitude $0.1$), as confirmed by the dataset metadata. This makes the reconstruction task significantly more challenging than purely diffusive decoherence models. Standard PINNs failed to disentangle the randomly switching telegraph signals from the true exponential decay, leading to an over-smoothed or unstable fit ($\text{MSE} \approx 10^{-1}$). In contrast, ParaQNN successfully identified the underlying dissipative invariants ($T_1, T_2$) despite the presence of impulsive interferers. This result demonstrates that the paraconsistent architecture does not require a sterile, Gaussian-only environment to discover physical laws; rather, it effectively treats non-Markovian jumps as high-evidence contradictions in the falsity channel, allowing the truth channel to lock onto the coherent dissipation dynamics.

\subsection*{Physical Interpretation of the $\alpha$ Parameter (Gatekeeper Concept)}
To avoid interpreting $\alpha$ as a static hyperparameter without physical meaning, we propose viewing it as a dynamic "gatekeeper" or "regulator."
\begin{itemize}
    \item \textbf{Logical Perspective:} $\alpha$ acts as a gatekeeper that dynamically relaxes the principle of non-contradiction based on the evidence available. It governs the admissibility of contradictory information within the network.
    \item \textbf{Physical Perspective:} Instead of measuring a specific physical rate (like decay time $T_1$), $\alpha$ acts as a regulator for the system's transition from a pure state to a mixed state. Rather than strictly "resisting" the physics, it quantifies the degree of mixing the network should tolerate to accurately recover the coherent signal.
    \item \textbf{Evidence:} As observed in Fig. 1c and Fig. 5c, the spikes in $\alpha$ during regime switches demonstrate that it tracks \textit{epistemic uncertainty} (the confusion between signal and noise) rather than correlating linearly with a specific physical constant like $\gamma$.
\end{itemize}

\subsection*{Robustness, noise, and epistemic generalization}
To further clarify the model's operational boundaries and validities, we address several key aspects of robustness and methodology:

\begin{itemize}
    \item \textbf{Noise model sensitivity:} Although the simulations utilized Gaussian, telegraph, pink, and SPAM noise, ParaQNN does not assume a parametric noise distribution. Instead, it learns noise as contradictory evidence in the falsity channel, making the reconstruction robust to changes in noise statistics without requiring specific noise modeling.
    
    \item \textbf{Denoising vs. physical discovery:} ParaQNN does not merely denoise the signal; it explicitly models noise as information via the falsity channel. This enables the recovery of the latent physical structure rather than simple noise suppression, distinguishing it from standard filtering techniques.
    
    \item \textbf{Use of ground-truth data and training in real experiments:} 
    In the synthetic benchmarks presented in this study, access to ground-truth latent trajectories ($t^*$) allowed for a direct quantification of reconstruction fidelity via the signal loss $L_{signal} = ||\hat{t} - t^*||^2$. However, we explicitly clarify that this is a benchmarking protocol, not an inherent requirement of the architecture.

In real experimental deployment where $t^*$ is inaccessible, the ParaQNN training relies on a \textit{self-consistent reconstruction loss} against the noisy measurement data $y_{measured}$. Specifically, the network minimizes the divergence between the observed data and the predicted expectation value derived from the truth channel, regularized by the falsity channel's estimate of the noise:
\begin{equation}
    L_{experimental} = || (\hat{t} + \mathcal{N}(\hat{f})) - y_{measured} ||^2 + \lambda_{con} L_{contradiction}(\hat{t}, \hat{f})
\end{equation}
Here, the model learns to decompose the single observed trace $y_{measured}$ into coherent ($\hat{t}$) and incoherent ($\hat{f}$) components by maximizing the logical consistency between them, thereby preserving its equation-free character without requiring a clean ground truth reference.
    \item \textbf{Why PINNs fail in the mixed regime:} The failure of the PINN (Known) baseline is consistent with prior literature indicating that standard PINNs struggle under stiff, non-Markovian, and noisy inverse settings\cite{ji2021stiff, wang2022respecting, chen2020physics, zhao2023learning}. PINN was given the correct instantaneous equation but lacked the capacity to infer the hidden regime-switching law, leading to catastrophic failure during abrupt transitions.
    
    \item \textbf{Statistical rigor:} All quantitative results are reported as mean $\pm$ standard deviation over five independent random seeds, while figures show representative runs (e.g., seed 42) for visual clarity.

    \item \textbf{Epistemic meaning of equation-free learning:} ParaQNN is not equation-free because it is flexible, but because it is constrained to separate contradictory hypotheses (signal vs noise), which prevents collapse into a single fitted curve.

\end{itemize}

\subsection*{Methodological positioning against operator learning frameworks}
While recent advances in scientific machine learning have introduced powerful architectures such as Neural Ordinary Differential Equations (Neural ODEs)\cite{chen2018neural} and Deep Operator Networks (DeepONets)\cite{lu2019deeponet}, our choice of PINNs as the primary baseline is driven by the specific nature of the NISQ characterization problem.

\begin{itemize}
    \item \textbf{Single-trajectory identification vs. operator learning:} DeepONets are designed for operator learning, mapping functions to functions, which typically necessitates massive datasets of input-output function pairs to learn the solution operator of a PDE family. In contrast, our study addresses a \textit{system identification} problem where the goal is to discover dynamics from a single (or few) noisy measurement traces of a specific experiment. Applying DeepONet in this data-sparse, single-instance scenario would be architecturally mismatched and data-inefficient.
    
    \item \textbf{Stiffness and discontinuity:} Standard Neural ODEs assume continuous, causal latent dynamics. However, our results in the mixed regime (Fig. 5) demonstrate that open quantum systems often involve stiff dynamics and abrupt, non-differentiable regime switches (e.g., Drive $\to$ Decay) coupled with non-Markovian noise. Neural ODEs are known to struggle with such stiff discontinuities and lack the explicit paraconsistent mechanism to process the "contradictory" evidence introduced by measurement noise.
    
    \item \textbf{Standard for inverse problems:} Consequently, PINNs represent the most direct and fair competitor, as they constitute the state-of-the-art regularization technique for single-instance inverse problems where the governing laws are partially known.
\end{itemize}
\subsection*{Fairness of comparison in non-stationary regimes}
A crucial aspect of our benchmarking in the mixed regime was ensuring a fair comparison between the "equation-free" ParaQNN and the "equation-based" PINN. One might argue that the PINN (Known) baseline fails simply because it was not provided with the explicit time-dependency of the parameters or the exact switching time ($t=4.0\mu s$).

We emphasize that providing such explicit information would violate the core objective of \textit{blind system discovery}.
\begin{itemize}
    \item \textbf{Discovery vs. Fitting:} The goal of this framework is not merely to fit parameters to a known model structure, but to discover the dynamics when the structure itself (e.g., the Hamiltonian switching point) is unknown. Both ParaQNN and the PINN baselines were provided with identical information: the noisy time-series data and no prior knowledge of the regime switch.
    
    \item \textbf{Architectural Adaptability:} The failure of the PINN (Known) highlights the brittleness of physics-informed methods when priors are even slightly mismatched (e.g., assuming a static Hamiltonian for a dynamic system). Conversely, ParaQNN succeeded not because it had access to more information, but because the paraconsistent logic layer allowed it to adaptively treat the sudden regime change as a shift in evidence rather than a violation of a rigid constraint. This confirms that ParaQNN is inherently better suited for discovery tasks under epistemic uncertainty.
\end{itemize}
\subsection*{Limitations and future work}
Several limitations of the present study should be acknowledged. The datasets considered here are synthetically generated, and while they incorporate realistic noise models and dynamical complexity, experimental validation on hardware platforms remains an important direction for future work. Additionally, the current formulation focuses on single-qubit population dynamics; extending the framework to multi-qubit systems and higher-dimensional Hilbert spaces will require further architectural and computational developments.

In summary, the ParaQNN framework introduces an equation-free approach to quantum dynamics discovery that remains robust under strong noise, incomplete physical knowledge, and time-varying system behavior. The results indicate that paraconsistent learning offers a promising alternative to traditional physics-informed methods, particularly in regimes where governing equations are unknown, approximate, or dynamically evolving.


\section*{Methods}

\subsection*{Mathematical specification of the dialetheist neuron}

To enable equation-free discovery in decohering quantum systems, we define a neural architecture grounded in paraconsistent logic\cite{abe2015foundations, dacosta1974inconsistent}. Each neuron encodes two simultaneous evidential channels: \emph{truth evidence} (coherent physical signal) and \emph{falsity evidence} (environmentally induced decoherence and noise). This dual representation reflects the physical superposition between coherent system evolution and stochastic environmental coupling in open quantum dynamics.

For a given layer $l$, neuron $j$ is represented by
\[
a_j^{(l)}=\begin{bmatrix} t_j^{(l)} \\ f_j^{(l)} \end{bmatrix}\in[0,1]^2,
\]
where $t$ quantifies signal coherence and $f$ quantifies noise evidence. The pre-activation states are computed through coupled linear transformations:
\begin{align}
z_t^{(l)} &= W_{tt}^{(l)} t^{(l-1)} + W_{tf}^{(l)} f^{(l-1)} + b_t^{(l)}, \\
z_f^{(l)} &= W_{ff}^{(l)} f^{(l-1)} + W_{ft}^{(l)} t^{(l-1)} + b_f^{(l)} .
\end{align}

This cross-coupling allows noise evidence to directly influence signal inference and vice versa, mimicking the interaction between quantum coherence and environmental decoherence.

We introduce a \textbf{Paraconsistent interaction activation function (PIAF)}:
\[
t_{\text{out}}=\sigma(k(z_t-\alpha z_f)),\qquad
f_{\text{out}}=\sigma(k z_f),
\]
where $\sigma$ is the sigmoid function, $k$ controls sharpness, and $\alpha>0$ is a learnable contradiction coefficient. Physically, $\alpha$ acts as an adaptive temperature between epistemic (model) uncertainty and aleatoric (noise) uncertainty, dynamically regulating how strongly environmental noise suppresses coherent signal inference. This allows the network to learn time-dependent decoherence without imposing a fixed dynamical equation.

\subsection*{Paraconsistent loss function}

We define a composite \textbf{Paraconsistent loss}:
\[
\mathcal{L}_{\text{ParaQNN}}
=\lambda_s\mathcal{L}_{\text{signal}}
+\lambda_n\mathcal{L}_{\text{noise}}
+\lambda_c\mathcal{L}_{\text{contradiction}} .
\]

\begin{itemize}
\item \textbf{Signal reconstruction}
\[
\mathcal{L}_{\text{signal}}=\frac1N\sum_i(\hat t_i-t_i^*)^2,
\]
which aligns the truth channel with the latent physical signal in synthetic benchmarks.

\item \textbf{Noise alignment}
\[
\mathcal{L}_{\text{noise}}=\frac1N\sum_i(\hat f_i-|y_i-t_i^*|)^2,
\]
which trains the falsity channel to approximate empirical noise. The latent target $t^*$ is used only for controlled synthetic benchmarking; in real experimental data this term is replaced by self-consistency constraints.

\item \textbf{Contradiction regularization}
\[
\mathcal{L}_{\text{contradiction}}=\frac1N\sum_i\max(0,\hat t_i+\hat f_i-1)^2,
\]
which suppresses pathological contradictions while permitting soft paraconsistent states required to model quantum superposition and decoherence.
\end{itemize}

Default weights are $(\lambda_s,\lambda_n,\lambda_c)=(1.0,0.5,0.5)$, with $\lambda_n$ increased in the mixed regime to account for higher noise entropy.

\subsection*{Training and reproducibility}

ParaQNN is optimized using Adam. Both network parameters and the contradiction coefficient $\alpha$ are trained jointly:
\begin{enumerate}
\item Initialize parameters and $\alpha=\alpha_0$.
\item Propagate $(t,f)$ through the network.
\item Compute $\mathcal{L}_{\text{ParaQNN}}$.
\item Update all parameters via backpropagation.
\end{enumerate}
All experiments are implemented in PyTorch. Synthetic datasets for Rabi, Lindblad, and mixed regimes are generated by numerically integrating open quantum system dynamics with injected Gaussian, telegraph, pink, and SPAM noise, chosen to reflect superconducting qubit and IBM Quantum experiments.

\subsection*{Formal Theoretical Support}

Based on the paraconsistent foundations of the architecture, we introduce the following lemma regarding the convergence behavior of the contradiction parameter.

\begin{lemma}[\textbf{Convergence of the Contradiction Regulator} $\alpha$]
Let $(t^{(l)}, f^{(l)}) \in [0, 1]^2$ be the truth and falsity evidence channels of a Paraconsistent Neural Network in layer $(l)$, and let $\alpha$ be the trainable interaction parameter that governs its nonlinear coupling in the activation function. Assume that the input noise process has bounded variance and that the training loss includes a contradiction regularization term that penalizes physically implausible simultaneous extreme values of $t$ and $f$.

Then, under gradient-based optimization, $\alpha$ converges to a finite value $\alpha^*$ that optimally separates the propagation of truth and false evidence, stabilizing the network against trivialization and preserving the coherent structure of the signal.
\end{lemma}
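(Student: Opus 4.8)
The plan is to treat the statement as a claim about the gradient-flow dynamics of the scalar parameter $\alpha$ inside the joint optimization of $\mathcal{L}_{\text{ParaQNN}}$, and to establish convergence through a descent-plus-coercivity argument rather than through any property specific to the quantum setting. First I would record the structural facts that make the analysis tractable: every evidence channel satisfies $t,f\in[0,1]$ because the PIAF outputs are sigmoids; each of the three loss components is nonnegative, so $\mathcal{L}_{\text{ParaQNN}}\ge 0$ is bounded below; and $\alpha$ enters the loss only through compositions of the form $\hat t_i=\sigma\!\big(k(z_{t,i}-\alpha z_{f,i})\big)$ (it does not appear in $f_{\text{out}}=\sigma(kz_f)$), where $\sigma$ is $C^\infty$ with $\sup|\sigma'|=1/4$ and $\sup|\sigma''|<\infty$. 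Consequently $\mathcal{L}$ is continuously differentiable in $\alpha$ with a Lipschitz gradient, which is exactly the hypothesis needed to invoke the descent lemma.

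Second, I would differentiate the loss; the dominant contribution to $\partial\mathcal{L}/\partial\alpha$ comes from the signal term and reads
\[
-\frac{2\lambda_s k}{N}\sum_i(\hat t_i-t_i^*)\,\sigma'\!\big(k(z_{t,i}-\alpha z_{f,i})\big)\,z_{f,i},
\]
the contradiction term and the deeper-layer chain-rule terms contributing analogous expressions of the same bounded form. Two features matter: the gradient is uniformly bounded, since $\sigma'$ and the residuals are bounded and $z_{f,i}$ lives in a compact set under the bounded-variance assumption on the inputs; and it vanishes precisely when the signal residual is decorrelated from the sensitivity $\sigma' z_f$, which is the first-order balance the lemma calls \emph{optimal separation}. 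Combining the Lipschitz gradient with the lower bound on $\mathcal{L}$, a step of Adam (or plain gradient descent with a small enough learning rate) gives a monotone-in-expectation decrease of $\mathcal{L}$, so the loss values converge, $\sum_t\|\nabla\mathcal{L}_t\|^2<\infty$, and in particular the $\alpha$-component of the gradient tends to zero.

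Third, and this is where the real work lies, I would upgrade ``gradient $\to 0$'' to ``$\alpha$ converges to a finite $\alpha^*$.'' The obstacle is that a co-adapting network could in principle drive $z_{f,i}\to 0$, decoupling $\alpha$ from $\mathcal{L}$ and letting it drift unboundedly. To rule this out I would use the noise-alignment term $\mathcal{L}_{\text{noise}}$: because the input noise has strictly positive bounded variance, the falsity target $|y_i-t_i^*|$ is nonzero on a set of positive measure, so minimizing $\mathcal{L}_{\text{noise}}$ forces $\hat f_i$—hence $z_{f,i}$—to stay bounded away from $0$ there. This keeps the $\alpha$-sensitivity nondegenerate and simultaneously shows that as $\alpha\to\infty$ the truth channel saturates ($\hat t_i\to 0$ or $1$), inflating $\mathcal{L}_{\text{signal}}$ whenever $t_i^*$ is interior, while as $\alpha\to 0^+$ noise leaks unfiltered into $\hat t$; the sublevel sets $\{\alpha:\mathcal{L}\le c\}$ are therefore bounded, i.e.\ $\mathcal{L}$ is coercive in $\alpha$ over the admissible domain $(0,\infty)$. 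Coercivity confines the trajectory to a compact interval, and on that interval I would close the argument either by an isolated-minimizer assumption or, more robustly, by a {\L}ojasiewicz-type gradient inequality for the analytic loss $\mathcal{L}$, which promotes convergence of the gradient to convergence of the iterate itself.

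Finally I would interpret $\alpha^*$ as the lemma requires: at the stationary point $\hat t$ approximates the latent signal $t^*$ (signal loss minimized) while the clamp $\mathcal{L}_{\text{contradiction}}$ keeps $\hat t_i+\hat f_i\le 1$, so the network never enters the trivializing regime in which both channels are simultaneously saturated—this is the ``stabilization against trivialization.'' The hard part is unquestionably the coercivity/non-degeneracy step of the third paragraph: without the noise-alignment anchoring there is no a priori barrier preventing $\alpha$ from escaping to infinity along directions where the weights suppress $z_f$, and everything else is a routine application of the descent lemma to a smooth, bounded-below objective.
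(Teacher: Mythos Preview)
Your proposal is sound and in fact supplies considerably more analytic content than the paper's own argument, which is only a short proof sketch. Both you and the paper follow the same high-level strategy—bounded noise, coercivity of the loss in $\alpha$, hence existence of a finite minimizer $\alpha^*$, interpreted as the equilibrium between noise suppression and signal preservation—but the routes diverge in two places. First, the paper attributes coercivity to the contradiction regularizer $\mathcal{L}_{\text{contradiction}}$ (simultaneous saturation of $t$ and $f$ is penalized, hence the loss is coercive in $\alpha$), whereas you derive it from $\mathcal{L}_{\text{signal}}$ together with $\mathcal{L}_{\text{noise}}$: the noise-alignment term pins $z_f$ away from zero, so that sending $\alpha\to\infty$ saturates $\hat t$ and inflates the signal loss. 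Your mechanism is arguably the more defensible one, since as $\alpha\to\infty$ one typically gets $\hat t\to 0$, which does \emph{not} trigger the hinge in $\mathcal{L}_{\text{contradiction}}$; the paper's sketch glosses over this. Second, the paper stops at ``coercive $\Rightarrow$ a local minimum exists,'' while you add the missing dynamical layer—the descent lemma for a Lipschitz-gradient, bounded-below objective, and then a \L{}ojasiewicz-type step to promote gradient convergence to iterate convergence. That extra step is genuinely needed if one wants $\alpha_t\to\alpha^*$ rather than merely $\nabla_\alpha\mathcal{L}\to 0$, and the paper does not address it. What the paper's brevity buys is that it sidesteps the joint-training issue (weights and $\alpha$ moving together) entirely; you confront it explicitly through the non-degeneracy of $z_f$, which is the honest place where the difficulty lives.
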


\begin{proof}[Proof Sketch]
In paraconsistent semantics, logical triviality arises only when the contradiction is unrestricted and unregulated. In the proposed ParaQNN architecture, $\alpha$ explicitly parameterizes the interaction between intensity $t$ and $f$, thus acting as a continuous regulator of the admissibility of the contradiction.

Since the noise variance is bounded, the expected contribution of false evidence $f$ to the loss remains finite. The contradiction regularizer imposes a penalty on states where $t$ and $f$ saturate simultaneously, ensuring that the loss function is coercive with respect to $\alpha$. Consequently, the optimization landscape admits at least one local minimum $\alpha^*$ in which the marginal benefit of suppressing noise-induced contradiction balances the preservation of coherent signal evidence.

In convergence, $\alpha^*$ defines an equilibrium regime in which contradiction is neither eliminated (as in classical logic) nor allowed to dominate (which would lead to epistemic triviality), but rather maintained at an optimal level compatible with the recovery of the physical signal. This result represents a logical-epistemic convergence rather than a convergence to a fixed physical constant.
\end{proof}


\subsection*{Benchmarking protocols and hyperparameter configuration}

To ensure a rigorous and unbiased evaluation, we implemented a ``fair benchmark'' protocol that standardizes architectural capacity, optimization duration, and regularization strategies across all deep learning models. Specifically, the physics-informed neural networks (PINNs) and generative adversarial networks (GANs) were configured with the same network depth and width as the proposed ParaQNN framework (3 hidden layers, 128 neurons per layer). This eliminates model capacity as a confounding variable, ensuring that performance differences are attributable solely to the underlying logical methodology rather than parameter count.

A critical aspect of this protocol was the standardization of regularization weights to prevent overfitting-based advantages. For PINN baselines, we enforced a fixed physics loss weight ($\lambda_{physics}=0.1$) across all dynamical regimes. While hyperparameter tuning per regime might marginally improve PINN performance, fixing this weight exposes the fundamental epistemic limitation of the method when priors are mismatched. Similarly, ParaQNN utilized constant structural weights ($\lambda_{signal}=1.0, \lambda_{contradiction}=0.5$) across all experiments, with the noise alignment weight ($\lambda_{noise}$) adapted solely based on the spectral complexity of the noise profile ($0.5$ for Gaussian, $0.8$ for pink noise).

Furthermore, the training duration (epochs) was synchronized for each physical regime—1500 epochs for Rabi, 2000 for Lindblad, and 4000 for mixed regime. This extended optimization rules out insufficient training as a cause for baseline failure, confirming that the limitations observed in PINNs are epistemic rather than computational. To mitigate stochastic bias, all deep learning models were trained and evaluated over 5 independent random seeds, and results are reported as mean $\pm$ standard deviation.

In contrast, classical regression models (Random Forest and XGBoost) rely on ensemble tree construction rather than gradient-based iterative optimization. Consequently, we utilized 100 estimators for both architectures, a configuration sufficient for convergence on these low-dimensional datasets, ensuring they reached their maximum potential capacity without overfitting. All models were implemented using Python libraries (PyTorch for deep learning, Scikit-Learn/XGBoost for classical models) and evaluated on identical held-out test sets. The specific hyperparameter configurations are summarized in Table~\ref{tab:hyperparameters} (ParaQNN) and Table~\ref{tab:baselines} (Baselines).

\begin{table}[htbp]
\centering
\caption{\textbf{Hyperparameter configuration for ParaQNN.} The learning rates and architecture depths were optimized for high-fidelity reconstruction. Note the extended training epochs and increased batch size for the mixed regime to capture complex switching dynamics.}
\label{tab:hyperparameters}
\resizebox{\textwidth}{!}{%
\begin{tabular}{l|c|c|c}
\hline
\textbf{Parameter} & \textbf{Rabi Regime} & \textbf{Lindblad Regime} & \textbf{Mixed Regime} \\ \hline
Optimizer & Adam & Adam & Adam \\
Learning Rate ($\eta$) & 0.001 & 0.001 & 0.001 \\
Training Epochs & 1500 & 2000 & 4000 \\
Batch Size & 256 & 256 & 512 \\
Hidden Layers & 3 & 3 & 3 \\
Neurons per Layer & 128 & 128 & 128 \\
Loss Weights ($\lambda_s, \lambda_n, \lambda_c$) & 1.0, 0.5, 0.5 & 1.0, 0.5, 0.5 & 1.0, 0.8, 0.5 \\
Random Seed & [42--46] & [42--46] & [42--46] \\ \hline
\end{tabular}%
}
\end{table}

\begin{table}[htbp]
\centering
\caption{\textbf{Configuration of baseline models used for benchmarking.} Deep learning architectures were standardized to match ParaQNN's capacity ($3 \times 128$ neurons), and training epochs were synchronized to ensure fair comparability. Classical models utilize ensemble-based convergence criteria.}
\label{tab:baselines}
\resizebox{\textwidth}{!}{%
\begin{tabular}{l|l|l|l}
\hline
\textbf{Model Family} & \textbf{Model Name} & \textbf{Key Hyperparameters} & \textbf{Training Protocol} \\ \hline
\multirow{2}{*}{Classical} & Random Forest (RF) & n\_estimators=100, max\_depth=12 & Ensemble Construction \\
 & XGBoost (XGB) & n\_estimators=100, lr=0.1 & Gradient Boosting \\ \hline
\multirow{3}{*}{Deep Learning} & PINN (Incomplete) & Layers: 3$\times$128, Act: Tanh & Adam (1500/2000/4000) \\
 & PINN (Known) & Layers: 3$\times$128, Act: Tanh & Adam (1500/2000/4000) \\
 & GAN & Gen/Disc: 3$\times$128, Loss: Adv+L2 & Adam (1500/2000/4000) \\ \hline
\end{tabular}%
}
\end{table}

\section*{Data availability statement}
All data supporting the findings of this study are generated via numerical simulations (Qiskit and custom Python solvers). 
The complete source code, trained models, and synthetic datasets required to reproduce all figures and benchmarks will be made publicly available upon publication. 
The repository includes all configuration files, scripts, and random seeds necessary for full reproducibility.


\section*{Author contributions}
A.C. conceived the original idea, designed the ParaQNN architecture, and developed the Python implementation. A.C. performed the simulations (Rabi, Lindblad, and mixed regimes), conducted the formal analysis of the results, and wrote the original draft of the manuscript. J.M.A. provided theoretical expertise on paraconsistent annotated logic, supervised the methodological consistency, validated the logical foundations of the framework, and contributed to the review and editing of the final manuscript.

\section*{Competing interests}
The authors declare no competing interests.

\section*{Funding}
The authors received no specific funding for this work.

\section*{Additional information}
\textbf{Correspondence} and requests for materials should be addressed to A.C.

\end{document}